\newfont{\seaddfnt}{phvr8t at 8pt}
\newif\if@restonecol
\newtheorem{thm}{Theorem}[section]
\newtheorem{lem}[thm]{Lemma}
\newtheorem{defn}[thm]{Definiton}
\newtheorem{prop}[thm]{Proposition}
\newtheorem{cor}[thm]{Corollary}
\newtheorem{conj}[thm]{Conjecture}
\newtheorem{exmp}[thm]{Example}
\newcommand{\GO}{\mathcal{ O}}
\newcommand{\sO}{\mathcal{ O}\tilde\ }
\newcommand{\Z}{\mathbb{Z}}
\newcommand{\minpoly}{{P_\text{min}^A}}
\newcommand{\charpoly}{{P_\text{char}^A}}
\jgdemail\url{Jean-Guillaume.Dumas@imag.fr}
\cpemail\url{Clement.Pernet@imag.fr}
\bdsemail\url{saunders@cis.udel.edu}
\begin{document}
\title{On finding multiplicities of characteristic polynomial factors
  of black-box matrices\thanks{Saunders supported by National Science
    Foundation Grants CCF-0515197, CCF-0830130.}.
}
\author{
Jean-Guillaume Dumas\thanks{Laboratoire J. Kuntzmann, Universit\'e de
  Grenoble. 51, rue des Math\'ematiques, umr CNRS 5224, bp 53X, F38041
  Grenoble, France, \jgdemail}
\and Cl\'ement Pernet\thanks{Laboratoire LIG, Universit\'e de
  Grenoble. umr CNRS, F38330 Montbonnot, France. \cpemail}
\and B. David Saunders\thanks{University of Delaware, Computer and
  Information Science Department. Newark / DE / 19716, USA. \bdsemail}
}

\maketitle

\begin{abstract}
We present algorithms and heuristics to compute the characteristic polynomial of a
matrix given its minimal polynomial. 
The matrix is represented as a black-box, i.e., by a function to compute
its matrix-vector product. 
The methods apply to matrices either over the integers or over a large enough finite field.
Experiments show that these methods perform efficiently in practice. Combined in
an adaptive strategy, these algorithms reach significant speedups in practice
for some integer matrices arising in an application from graph theory.
 \end{abstract}
 \noindent
 {\bf Keywords:} Characteristic polynomial ; black-box matrix ; finite field.

\section{Introduction}
Computing the characteristic polynomial of an integer matrix is a
classical mathematical problem. 
It is closely related to the computation of the Frobenius normal form
which can be used to test two matrices for similarity, or computing invariant
subspaces under the action of the matrix.
Although the Frobenius normal form contains more information on 
the matrix than the characteristic polynomial, most algorithms to
compute it are based on computations of characteristic polynomials (see
for example \cite[\S 9.7]{Storjohann:2000:thesis}).

Several matrix representations are used in computational linear algebra. In the
dense representation, a $m\times n$ matrix is considered as the array of all the $m\times n$
coefficients. 
The sparse representation only considers non-zero coefficients using different possible
data structures. In the black-box representation,  the matrix is viewed  as a linear
operator, and no other operation than the application to a vector is
allowed. Though constraining, this limitation preserves the structure or
sparsity of the matrix and is therefore especially well suited for very large
sparse or structured matrices.

Computation of the characteristic polynomial of \textit{dense} matrices has already been well
studied both in theory and practice: over a finite field,
\cite{Keller-Gehrig:1985:charpoly,Pernet:2007:charp} set the best complexity
(using respectively a deterministic and a probabilistic algorithm), and 
\cite{jgd:2005:charp,Pernet:2007:charp} propose efficient implementations.
Over the integers, the best complexity is achieved in
\cite{Kaltofen:2005:CCDet}, but currently the  most efficient implementations 
rely on the Chinese remainder algorithm \cite{jgd:2005:charp}. 

In the latter article, a competitive approach is introduced that limits the use
of the Chinese remainder algorithm to the computation of the minimal polynomial.
The characteristic polynomial is then recovered by determining
the multiplicities of its irreducible factors. This task is done using the
deterministic algorithm for the characteristic polynomial over a randomly
chosen prime field.

In the black-box model, the minimal polynomial is used as a building block for
many algorithms over a finite field. Adapted from the iterative numerical
methods (Lanczos, Krylov), the Wiedemann minimal polynomial algorithm
\cite{Wiedemann:1986:SSLE,Kaltofen:1991:SSLS} has excellent asymptotic complexity
and is used in efficient black-box linear algebra software such as 
\texttt{LinBox}\footnote{\url{www.linalg.org}}.

However less is known concerning the characteristic
polynomial of black-box matrices. It is an open problem
\cite[Open Problem 3]{Kaltofen:1998:Open} to compute 
the characteristic polynomial as efficiently as the minimal polynomial, using
the Wiedemann method. The latter uses $\GO(n)$ products of a square $n\times n$
matrix by a vector and $\GO(n^2(\log n)^{\GO(1)})$ additional arithmetic
operations with $\GO(n)$ extra memory storage.
Eberly gives an algorithm  using $\GO(n)$ matrix vector products,
and $\GO(\phi n^2)$ additional operations, where $\phi$ is the number of
invariant factors of the matrix~\cite{Eberly:2000:BBFDOSF}. 
In the worst case, $\phi = \GO(n)$ and the
algorithm does not improve on the complexity of dense algorithms.
Villard proposed in \cite{Villard:2000:Frob} a black box
algorithm to compute the Frobenius normal form and therefore the characteristic
polynomial in $\GO(\sqrt{n}\log(n))$ computations of minimal polynomials 
and $\GO(n^{2.5} (\log n)^2 \log\log n)$ additional field operations. 

Instead, we propose here several algorithms and heuristics focusing on
efficiency in practice.
The general strategy is to compute the minimal
polynomial using  Wiedemann's algorithm and decompose it into irreducible factors.
There only remains to determine to which multiplicity each of
these factors appear in the characteristic polynomial. 
In section \ref{sec:multip} we propose
several methods to determine these multiplicities.
Adaptive combination of them is discussed in 
section \ref{sec:adap}. 
Under a conjectured hypothesis the latter is shown to
require  $\GO\ (n\sqrt{n})$ matrix vector products which improves by a logarithmic
factor on the complexity of Villard's algorithm.

Lastly, an algorithm for the computation over the ring of integers is derived in
section \ref{sec:spint}.
It is based on the  multifactor Hensel lifting of a gcd-free basis, following Storjohann~\cite{Storjohann:2000:Frob}. 
The benefit of this approach is verified by experiments
presented in section \ref{sec:exp}. Several sparse matrices are considered,
including a set of  adjacency matrices of strongly regular graphs, coming from an
application in graph theory.





\section{Three methods for computing multiplicities}\label{sec:multip}

In this section we consider a matrix $A$ over a finite field
$K=\text{GF}(q)$.
Let 
$\minpoly = \prod_{i=1}^k P_i^{e_i}
$
be the decomposition of the minimal polynomial of $A$ in irreducible monic
factors. The characteristic polynomial is then 
\begin{equation}\label{eq:charpol}
\charpoly = \prod_{i=1}^k P_i^{m_i}
\end{equation} 
for some $m_i \geq e_i$. We also denote by $d_i$ the degrees of
the factors: $d_i = \text{deg}(P_i)$.

To recover the multiplicities $m_i$, we will present three techniques,
based on black-box computations with the matrix $A$:
the nullity method (\S \ref{sec:nullity}) uses the
rank of $P_i(A)$ to reveal information on the multiplicity~$m_i$, 
the combinatorial search (\S \ref{sec:combs}) is a branch and bound
technique to solve the total degree equation whose integral unknowns are the
multiplicities and the index calculus technique (\S \ref{sec:dlog}) uses a
linear system solving based  on the discrete logarithm  
 of equation \ref{eq:charpol} evaluated in random values.

\subsection{The nullity method}
\label{sec:nullity}
\begin{defn}
The nullity  $\nu(A)$ of a matrix $A$  is the dimension of its nullspace.
\end{defn}
We also recall the following definitions:

The \textit{companion matrix} of the monic polynomial 
$P=X^d + \sum_{i=0}^{d-1} a_i X^i$ is the matrix
{\scriptsize
$
\renewcommand{\arraystretch}{.6}
\setlength{\arraycolsep}{.35\arraycolsep}
C_P = 
\left[\begin{array}{cccc}
0&      &     & -a_0\\
1&0     &     & -a_1\\
 &\ddots&\ddots&\vdots\\
 &      &  1   &-a_{d-1}
\end{array}\right].
$}
Its minimal polynomial and its characteristic polynomial are equal to $P$.

The \textit{block Jordan matrix} of an irreducible polynomial $P$ of degree $d$ to a
power $k$ is the $kd\times kd$ matrix $J_{P^k}$ of the form
$
J_{P^k}=
\renewcommand{\arraystretch}{.5}
\setlength{\arraycolsep}{.5\arraycolsep}
\left[\begin{array}{cccc}
  C_P&B&&\\
  &\ddots&\ddots&\\
  &&C_P&B\\
  &&&C_P
\end{array}\right]
$
where the $d\times d$ matrix $B$ is filled with zeros except for $B_{d,1}=1$.
Its minimal polynomial and its characteristic polynomial are equal to $P^k$.
This definition extends the usual notion of Jordan blocks for $d=1$.
 
The \textit{Frobenius normal form} of a Matrix $A$ is the unique  block diagonal matrix
$F=\text{Diag}(C_{f_0},C_{f_1},\dots)$ such that $A=U^{-1}FU$ for a nonsingular matrix
$U$. The polynomials $f_i$ are the invariant factors of $A$ and satisfy $f_0 =
\minpoly$ and $f_{i+1}$ divides $f_{i}$ for all $i\geq0$.

The \textit{primary form} of a Matrix $A$ (also called the second Frobenius form in
\cite{Gantmacher:1959:TMone}) is a further decomposition of the Frobenius
normal form where each companion block $C_{f_i}$ is replaced by a block
diagonal matrix  $\text{Diag}(J_{g_1^{k_1}}, J_{g_2^{k_2}},\dots)$. 
  The $g_j$ are the irreducible factors of $f_i$, with the respective
  multiplicities $k_j$. 
The primary form is unique up to the order of the blocks. 

%
\begin{exmp}
Consider the matrix in Frobenius normal form 
{\small $$A= \text{Diag}(C_{X^5-6X^4 +14X^3 - 16X^2 + 9X-2},C_{X^2-2X+1})$$} over
$\text{GF}(5)$. The corresponding 
 primary form is the matrix {\small $$B = \text{Diag}(J_{(X^2-2X-1)^2}, J_{X-2},
   J_{X^2-2X-1}).$$}\\
{\scriptsize
$
\renewcommand{\arraystretch}{.6}
\setlength{\arraycolsep}{.5\arraycolsep}
A= 
\left[
\begin{array}{ccccccc}
0& 0 & 0 & 0 & 2 &  &\\
1& 0 & 0 & 0 & -9 &  &\\
 & 1 & 0 & 0 & 16 &  &\\
 &   & 1 & 0 & -14 &  &\\
 &   &   & 1 & 6 &  & \\
 &   &   &   &   & 0& -1\\
 &   &   &   &   & 1& 2\\
\end{array}
\right]
,
B= 
\left[
\begin{array}{ccccccc}
0& 0  & 0 & -1  &  &  &\\
1& 0 & 0 & 4 &  &  &\\
 & 1 & 0 & -6 &  &  &\\
 &   & 1 & 4 &  &  &\\
 &   &   &   & 2&  & \\
 &   &   &   &  & 0& -1\\
 &   &   &   &  & 1 &2\\
\end{array}
\right]
$}
\end{exmp}
The method of the nullity is based on the following lemma:
\begin{lem}
Let A be a square matrix and 
let $P$ be an irreducible polynomial 
of degree $d$, of multiplicity $e$ in the minimal polynomial of $A$,
and of multiplicity $m$ in the characteristic polynomial of $A$.  
Then $\nu(P^{e}(A))=md.$
\end{lem}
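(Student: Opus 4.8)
The plan is to use the two standard properties of the nullity: it is invariant under conjugation, $\nu(U^{-1}MU)=\nu(M)$, and it is additive on block-diagonal matrices. Applying the first with $M=P^e(A)$ and $U$ the change of basis bringing $A$ to its primary form lets us replace $A$ by $\text{Diag}(J_{g_1^{k_1}},J_{g_2^{k_2}},\dots)$; applying the second, and noting that $P^e$ evaluated at a block-diagonal matrix is block diagonal with blocks $P^e(J_{g_j^{k_j}})$, reduces the problem to computing $\nu\bigl(P^e(J_{g_j^{k_j}})\bigr)$ for each primary block separately. I would also record what $e$ and $m$ mean in terms of the $g_j,k_j$: since $\minpoly=\operatorname{lcm}_j g_j^{k_j}$ and $\charpoly=\prod_j g_j^{k_j}$, the integer $e$ equals $\max\{k_j : g_j=P\}$ and $m$ equals $\sum\{k_j : g_j=P\}$.

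Next I would split the blocks into two kinds. If $g_j\neq P$, then $g_j$ and $P$ are distinct monic irreducibles, hence $\gcd(g_j^{k_j},P^e)=1$; choosing $u,v\in K[X]$ with $u\,g_j^{k_j}+v\,P^e=1$ and evaluating at $J_{g_j^{k_j}}$, whose minimal polynomial is $g_j^{k_j}$, gives $v(J_{g_j^{k_j}})\,P^e(J_{g_j^{k_j}})=I$, so $P^e(J_{g_j^{k_j}})$ is invertible and contributes $0$ to the nullity. If $g_j=P$, then $k_j\le e$ by the description of $e$ above, so $P^{k_j}\mid P^e$; since $P^{k_j}$ is the minimal polynomial of $J_{P^{k_j}}$, we get $P^e(J_{P^{k_j}})=0$, the $d k_j\times d k_j$ zero matrix, whose nullity is $d k_j$.

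Summing the contributions yields $\nu(P^e(A))=\sum_{j:\,g_j=P} d k_j = d\sum_{j:\,g_j=P} k_j = d m$, which is the claim. The argument is short, and the only points that need care are the standing facts about the primary form over $K$ — its existence and the resulting expressions for $\minpoly$ and $\charpoly$ in terms of the blocks — together with the coprime-evaluation step; there is no real obstacle beyond this bookkeeping. An alternative that avoids invoking the primary form is to restrict $A$ to its $P$-primary invariant subspace $\ker P^e(A)$, on which $P(A)$ acts nilpotently with index exactly $e$, and to analyze the induced structure directly, but the block-diagonal reduction is the cleanest route.
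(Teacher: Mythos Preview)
Your proof is correct and follows essentially the same route as the paper's: reduce to the primary form, observe that $P^e$ annihilates the blocks $J_{P^{k_j}}$ (since $k_j\le e$) while $P^e(J_{g_j^{k_j}})$ is invertible for $g_j\neq P$ by coprimality, and sum the block dimensions. Your version is a bit more explicit---you spell out the B\'ezout argument and the identifications $e=\max\{k_j:g_j=P\}$, $m=\sum\{k_j:g_j=P\}$---but the structure is identical.
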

\begin{proof}
Let F be the primary form of $A$ over $K$: $F=U^{-1}AU$ for a non singular
matrix $U$.
$F$ is block diagonal of the form 
$\text{Diag}(J_{P_j^{e_{j}}})$.
Then $P^{e}(A) = U^{-1}P^{e}(F)U =
U^{-1}\text{Diag}(P^{e}(J_{P_j^k}))U$.
On one hand $P^{e}$ annihilates the blocks $J_{P_j^k}$ where $P = P_j$ and
$k\leq e$.
On the other hand, the rank of $P^{e}(J_{P_j^k})$ is full for $P \neq P_j$,
since $P$ and $P_j$ are relatively prime.
Thus the nullity of $P^{e}(A)$ exactly corresponds to the
total dimension of the blocks $J_{P_j^k}$ where $P = P_j$, which is $md$.
\end{proof}
From this lemma the following algorithm, computing the multiplicity $m_i$ of an
irreducible factor $P_i$ is straight-forward:
%
%
\begin{algorithm}
  \dontprintsemicolon
  \caption{\texttt{Nullity}}\label{alg:nullity}
  \KwData{$A$: an $n\times n$ matrix over a Field $K$,}
  \KwData{$P$: an irreducible factor of $\minpoly$,}
  \KwData{$e$: the multiplicity of $P$ in $\minpoly$,}
  \KwResult{$m$: the multiplicity of $P$ in $\charpoly$.}
  \Begin{
      $r = \text{rank}(P^{e}(A))$\;
      \Return $m = (n-r)/\text{degree}(P)$\;
    }
  \end{algorithm}

\begin{prop}
Algorithm \ref{alg:nullity} computes the multiplicity of $P$ in
the characteristic polynomial of an $n \times n$ matrix $A$ using $\GO(edn\Omega)$ 
field operations, where $\Omega$ is the cost of a matrix-vector
product with $A$, $d$ is the degree of $P$ and $e$ is its multiplicity
in the minimal polynomial.
\end{prop}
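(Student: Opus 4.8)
The plan is to account for the cost of the two operations in Algorithm~\ref{alg:nullity}: forming (implicitly) the black-box matrix $P^e(A)$, and computing its rank. First I would observe that $P^e$ is a polynomial of degree $ed$, so applying $P^e(A)$ to a vector $v$ amounts to computing $A v, A^2 v, \dots, A^{ed} v$ by iterated matrix-vector products and taking the appropriate linear combination of these vectors with the coefficients of $P^e$. This costs $ed$ applications of $A$, i.e.\ $\GO(ed\,\Omega)$ field operations, plus $\GO(ed\,n)$ operations for the linear combination; since $\Omega \geq n$ for any matrix touching every row, the per-vector cost is $\GO(ed\,\Omega)$. Thus $P^e(A)$ is itself available as a black box whose matrix-vector product costs $\GO(ed\,\Omega)$.

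Next I would invoke the standard black-box rank algorithm (Wiedemann-style, as referenced in the introduction): the rank of an $n\times n$ black-box matrix $M$ over a large enough field $K$ can be computed with $\GO(n)$ applications of $M$ together with $\GO(n^2(\log n)^{\GO(1)})$ additional field operations. Substituting $M = P^e(A)$, whose application costs $\GO(ed\,\Omega)$, the $\GO(n)$ black-box applications contribute $\GO(ed\,n\,\Omega)$ field operations, which dominates the $\GO(n^2(\log n)^{\GO(1)})$ overhead (again because $\Omega \geq n$). Hence the total cost is $\GO(ed\,n\,\Omega)$, as claimed. Correctness of the returned value is immediate from the preceding lemma: the lemma gives $\nu(P^e(A)) = md$ where $m$ is the multiplicity of $P$ in $\charpoly$, and since $\nu(P^e(A)) = n - \operatorname{rank}(P^e(A)) = n-r$, we get $m = (n-r)/d = (n-r)/\degree(P)$, which is exactly what the algorithm returns.

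The main subtlety — the place I would be most careful — is the probabilistic rank computation over a finite field: the Wiedemann rank algorithm requires preconditioning (e.g.\ by a random diagonal or a butterfly network) and succeeds only with high probability, and this is why the proposition's hypothesis that $K$ is "large enough" is needed; over a small field one would either work in an extension or accept a failure probability that does not decrease. I would state explicitly that the complexity bound holds for the Monte Carlo rank procedure, noting that the correctness of the output is conditional on that procedure returning the true rank. Everything else is routine bookkeeping of the form "degree $ed$ polynomial evaluation via Horner on vectors" and the observation $\Omega \geq n$.
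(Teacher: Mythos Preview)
Your proof is correct and follows essentially the same approach as the paper: treat $P^e(A)$ as a black box whose matrix-vector product costs $\GO(ed\,\Omega)$, then invoke Wiedemann's rank algorithm with preconditioners at a cost of $\GO(n)$ black-box applications. The only cosmetic difference is that the paper evaluates $P^e(A)v$ by applying Horner's rule to $P(A)$ and composing $e$ times rather than expanding $P^e$ into a single degree-$ed$ polynomial; your added remarks on correctness via the preceding lemma and on the probabilistic nature of the rank computation are sound elaborations that the paper's proof leaves implicit.
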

\begin{proof}
Using Horner's rule, the matrix $P(A)$ can be written as $a_0I_n
+A(a_1I_n+A(a_2I_n+\dots))$. Hence, applying a vector to this blackbox
only requires $d$ applications of a vector to the blackbox $A$,
ie. $\GO(d\Omega)$ field operations.
Thus applying a vector to $P^{e}(A)$ costs $\GO(ed\Omega)$.
Lastly, the rank of this matrix can be
computed in $\GO(edn\Omega)$ field operations, using 
Wiedemann's algorithm combined with preconditioners~\cite{jgd:2002:villard}.
\end{proof}
This algorithm is therefore suitable for irreducible factors $P$ where the
product $e d$ is small.

Now if $e$ is large, the computation of $\text{rank}(P^{e}(A))$ may be too
 expensive. Still, some partial knowledge on the multiplicity can be
recovered from the rank of the first powers of $P(A)$. This can help to shorten
the computation of other algorithms, as will be shown in section \ref{sec:adap}.
We now describe how these partial multiplicities can be recovered.

The multiplicity $m$ of $P$ is formed by the contribution of several blocks of the 
type $J_{P^j}$ for $j\in [1\dots e]$ in the primary form of $A$.
Whereas the blocks with small $j$ can be numerous, there must be few  blocks
with large $j$, due to the limitation of the total dimension (since $e$ is
large). 

We denote by $n_{i,j}$ the number of occurrences of $J_{P_i^j}$ in the
primary form of $A$. From the determination of the $n_{i,j}$, we can
directly deduce the multiplicity $m_i$ by the relation 
\begin{equation}\label{eq:mi} m_i =
\sum_{j=1}^{e}{j n_{i,j}}.
\end{equation}
We now show how to compute the $n_{i,j}$ for
small $j$, using algorithm \ref{alg:nullity}. 

\begin{lem}
Let $P$ be an irreducible polynomial of degree $d$ over a finite field $K$
and $k$ and $e\geq 1$ be two integers. Then 
$\nu(P^k(J_{P^e})) = \text{min}(k,e)d \\ 
$
\end{lem}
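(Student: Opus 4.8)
The plan is to analyze directly the linear operator $P^k$ acting on the cyclic module structure underlying $J_{P^e}$. Recall that $J_{P^e}$ is the companion-type block whose minimal and characteristic polynomials both equal $P^e$, so the corresponding $K[X]$-module is $K[X]/(P^e)$, a local ring with maximal ideal $(P)$. In this language, $P^k(J_{P^e})$ is multiplication by $P^k$ on $K[X]/(P^e)$, and $\nu(P^k(J_{P^e}))$ is the $K$-dimension of the kernel of that multiplication map, i.e. the annihilator of $P^k$ in $K[X]/(P^e)$.

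The key computation is to identify this annihilator. If $k \geq e$, then $P^k \equiv 0$ in $K[X]/(P^e)$, so the map is zero and the kernel is everything, of dimension $ed = \min(k,e)d$. If $k < e$, then an element $\bar f \in K[X]/(P^e)$ satisfies $P^k \bar f = 0$ exactly when $P^e \mid P^k f$, i.e. when $P^{e-k} \mid f$; so the annihilator is the ideal $(P^{e-k})/(P^e)$, which is isomorphic as a $K$-vector space to $K[X]/(P^k)$ and hence has dimension $kd$. Combining the two cases gives $\nu(P^k(J_{P^e})) = \min(k,e)d$.

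First I would state the module interpretation and cite the fact (from the definition of $J_{P^e}$ recalled above, where its minimal polynomial is $P^e$) that the pair $(K^{ed}, J_{P^e})$ is $K[X]$-isomorphic to $K[X]/(P^e)$ with $X$ acting by the operator. Then I would split into the two cases $k\geq e$ and $k<e$ as above, in each case computing the dimension of $\ker(P^k\cdot)$ by the divisibility argument. A marginally more elementary alternative, avoiding explicit module language, is to use the rank–nullity relation together with the fact that $P^{k}(J_{P^e})$ and $P^{\min(k,e)}(J_{P^e})$ have the same image when $k\geq e$ (both zero) and, when $k<e$, to exhibit a basis of the image as $\{P^k, P^k X, \dots\}$ reduced mod $P^e$, whose span has dimension $(e-k)d$, so the nullity is $ed-(e-k)d = kd$.

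The main obstacle is essentially bookkeeping rather than depth: one must be careful that the degree factor $d$ enters correctly, i.e. that $\dim_K K[X]/(P^j) = jd$ when $\deg P = d$, and that the isomorphism $(P^{e-k})/(P^e)\cong K[X]/(P^k)$ is genuinely $K$-linear of the claimed dimension. There is also a small subtlety in making sure the argument does not secretly use irreducibility of $P$ where it is not needed — in fact the dimension count works for the divisibility reasons above regardless, but irreducibility is what guarantees $P^e$ is exactly the minimal polynomial of $J_{P^e}$ and hence the module is the single quotient $K[X]/(P^e)$ rather than a more complicated torsion module. I would make that dependence explicit at the start and then the rest follows cleanly.
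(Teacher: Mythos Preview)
Your proof is correct and takes a genuinely different route from the paper's. The paper argues matrix-theoretically: it passes to a splitting field $\overline{K}$ of $P$, observes that over $\overline{K}$ the primary form of $J_{P^e}$ is a direct sum of $d$ classical Jordan blocks $J_{P_i^e}$ (one per linear factor $P_i$ of $P$, distinct because finite fields are perfect), shows that each $P(J_{P_i^e})$ is similar to the nilpotent block $J_{X^e}$, and then reads off $\text{rank}\bigl((J_{X^e})^k\bigr)=\max(0,e-k)$ to get nullity $\min(k,e)d$. You instead stay entirely over $K$, using that the pair $(K^{ed},J_{P^e})$ is the cyclic $K[X]$-module $K[X]/(P^e)$ and computing the annihilator of $P^k$ by a one-line divisibility argument. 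Your approach is shorter, avoids the extension field and the separability of $P$, and isolates exactly where irreducibility is used (only to guarantee the module is the single quotient $K[X]/(P^e)$); the paper's approach is more hands-on for a reader thinking in Jordan-block terms and makes the eigenvalue-level picture explicit, at the cost of tracking similarity through the base change.
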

\begin{proof}
Let $A=J_{P^e}$ and $B=P^k(A)$. 
If $k\geq e$, then $P^k$ is a multiple of the minimal polynomial of
$A$. Thus $B$ is the zero matrix, and its nullity equals its dimension: $ed$.


Now suppose $k<e$.
Let $\overline{K}$ be an extension of $K$ such that $P$ splits into
$d$ degree one factors $P_i$ over $\overline{K}$. Since any finite field is a perfect field, these factors are distinct.

The minimal polynomial of $A$ over $\overline{K}$ is still $P^e$. 
Consequently, the Frobenius normal form of A over
$\overline{K}$ is $C_{P^e}$ and  
its primary form is $F=\text{Diag}(J_{P_i^e})$. More precisely,
there exists $U \in M_n(\overline{K})$ such that
$A = U^{-1}FU$.
We have therefore $B=U^{-1}P^k(F)U = U^{-1}\text{Diag}(P^k(J_{P_i^e}))U.$

First consider the case $k=1$: the minimal polynomial of each
$P_i(J_{P_i^e})$ is $X^e$ and so is the minimal polynomial of each
$P(J_{P_i^e})$ (since the $P_i$ are relatively prime). Hence the primary form
of $P(J_{P_i^e})$ is $J_{X^e}$. Therefore there
exist $V\in M_n(\overline{K})$ such that 
$$
B=U^{-1}V^{-1}
Diag(\underbrace{J_{X^e},\dots,J_{X^e}}_{d \text{ times}}) V U.
$$

Lastly the rank of $J_{X^e}$ being $e-1$, we deduce that
$\text{rank}(B)= d(e-1)$. The nullity of $B$ is therefore $\nu(B)=d$.

For the general case, we have
$$
B =
U^{-1}V^{-1}\text{Diag}(\underbrace{(J_{X^e})^k,\dots,(J_{X^e})^k}_{d
\text{ times}})VU.
$$
Now $J_{X^e}$ is $e\times e$ and nilpotent with ones on the super-diagonal
so that its $k$-th power has rank $\max(0,e-k)$.
Thus, $\text{rank}(B)=\max(0,e-k)d$ and $\nu(B)=\min(e,k)d$.
\end{proof}
We now apply this result to the irreducible factors of the minimal polynomial
and denote the nullity of $P_i^j(A)$ by 
$\nu_{i,j}=\nu(P_i^j(A))$. 

First, the nullity of $P_i(A)$,  can be decomposed
into the sum of the nullities of each $P_i(J_{P_i^k})$ for every $k\leq e_i$:
\begin{equation}
\nu_{i,1} = \sum_{k=1}^{e_i}{n_{i,k}d_i}\\ \label{eq:p1}
\end{equation}
Now applying $P_i^j$ to $A$, every $P_i^j(J_{P_i^k})$ for $k\leq j$
will be a zero matrix and therefore contribute with $kd_i$ to the
nullity. Otherwise, if $k > j$, the contribution to the nullity
remains $jd_i$. Therefore we have:
\begin{equation}
\nu_{i,j} = \sum_{k=1}^{j}{n_{i,k}kd_i} +  \sum_{k=j+1}^{e_i}{n_{i,k}jd_i}\\ \label{eq:pj}
\end{equation}
From these two equations, we deduce the $n_{i,j}$: first we have
$$
\frac{1}{j-1}\nu_{i,j-1} =  \frac{1}{j-1} \sum_{k=1}^{j-1} n_{i,k}kd_i +
n_{i,j}d_i + \sum_{k=j+1}^{e_i} n_{i,k} d_i.
$$
Now, since:
$  \sum_{k=j+1}^{e_i} n_{i,k}d_i = \nu_{i,j+1}-\nu_{i,j},$
the number of occurrences directly is:
$$
n_{i,j} = \frac{1}{d_i}\left(\frac{1}{j-1}\nu_{i,j-1}  + \nu_{i,j} -
\nu_{i,j+1}\right) - \frac{1}{j-1} \sum_{k=1}^{j-1} n_{i,k}k.
$$ 
Therefore we obtain corollary \ref{cor:occ} giving the expression of the $n_{i,j}$:
\begin{cor}\label{cor:occ}
\begin{eqnarray*}
n_{i,1} &=& (2\nu_{i,1}-\nu_{i,2})/d_i \\
n_{i,j} &=& \frac{1}{d_i}\left(\frac{1}{j-1}\nu_{i,j-1}  + \nu_{i,j} -
\nu_{i,j+1}\right) \\
&&- \frac{1}{j-1} \sum_{k=1}^{j-1} n_{i,k}k \ \ \forall j \in
[1\dots e_i]\\
n_{i,e_i}&=& \frac{\nu_{i,e_i}}{e_id_i} - \frac{1}{e_i}\sum_{k=1}^{e_i-1}n_{i,k}k
\end{eqnarray*}
\end{cor}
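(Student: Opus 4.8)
The plan is to read Corollary~\ref{cor:occ} as a purely elementary consequence of the linear system formed by equations~\eqref{eq:p1} and~\eqref{eq:pj}, in which the $n_{i,j}$ are the unknowns and the nullities $\nu_{i,j}$ the data; each of the three displayed formulas is obtained by a short elimination, exactly along the lines already sketched in the paragraph preceding the corollary. I fix the index $i$ once and for all and, to lighten notation, write $n_k$ for $n_{i,k}$, $\nu_j$ for $\nu_{i,j}$, $d$ for $d_i$ and $e$ for $e_i$.

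The one identity that does all the work is the telescoping relation
\[
  \sum_{k=j+1}^{e} n_k \, d \;=\; \nu_{j+1} - \nu_j, \qquad 1 \le j \le e ,
\]
which I would establish first. Rewriting~\eqref{eq:pj} as $\nu_j/d = \sum_{k=1}^{j} k\, n_k + j\sum_{k=j+1}^{e} n_k$ and subtracting the instances at $j$ and $j+1$, the term $n_{j+1}(j+1)$ produced by the first sum cancels against $j\, n_{j+1}$ from the second, and the coefficient of every $n_k$ with $k \ge j+2$ drops from $j+1$ to $j$; what survives is precisely $\sum_{k \ge j+1} n_k$. (At $j=e$ the right-hand side is $0$, consistent with the fact that $P_i^{e}$ already annihilates every block $J_{P_i^k}$ with $k\le e$, so $\nu_j = \nu_e$ for all $j \ge e$.)

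Given this, the three cases are immediate. For $2 \le j \le e$ I take~\eqref{eq:pj} at index $j-1$, divide by $j-1$, and peel off the $k=j$ term from the second sum, obtaining
\[
  \frac{\nu_{j-1}}{j-1} \;=\; \frac{d}{j-1}\sum_{k=1}^{j-1} k\, n_k \;+\; d\, n_j \;+\; d\sum_{k=j+1}^{e} n_k ;
\]
substituting the telescoping relation for the last sum and solving for $n_j$ gives the middle formula of the corollary. The endpoint $j=1$ cannot use this, since the factor $1/(j-1)$ is undefined there, so I treat it separately: equation~\eqref{eq:p1} reads $\nu_1 = d\, n_1 + d\sum_{k=2}^{e} n_k$, and replacing $d\sum_{k=2}^{e} n_k$ by $\nu_2 - \nu_1$ (the $j=1$ instance of the telescoping relation) yields $n_1 = (2\nu_1 - \nu_2)/d$. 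Finally, the last line is obtained directly from~\eqref{eq:pj} at $j=e$, whose second sum is empty: $\nu_e = d\sum_{k=1}^{e} k\, n_k$, so isolating the $k=e$ term gives $n_e = \frac{\nu_e}{ed} - \frac{1}{e}\sum_{k=1}^{e-1} k\, n_k$.

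There is no real obstacle in the argument; it is a routine elimination once~\eqref{eq:p1}--\eqref{eq:pj} are in hand. The only points deserving attention are the bookkeeping of indices in the telescoping relation and the observation that the $j=1$ endpoint must be derived on its own rather than as a specialization of the recurrence. I would close by noting what makes the corollary usable in practice: the middle formula expresses $n_{i,j}$ in terms of $\nu_{i,j-1},\nu_{i,j},\nu_{i,j+1}$ and the already-computed $n_{i,1},\dots,n_{i,j-1}$ only, so the three formulas together recover all the $n_{i,j}$ --- and hence, via~\eqref{eq:mi}, the multiplicity $m_i$ --- from the nullities $\nu_{i,1},\dots,\nu_{i,e_i}$ alone.
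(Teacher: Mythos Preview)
Your argument is correct and follows essentially the same route as the paper: the paper too derives the telescoping identity $\sum_{k=j+1}^{e_i} n_{i,k}\,d_i = \nu_{i,j+1}-\nu_{i,j}$, takes~\eqref{eq:pj} at index $j-1$ divided by $j-1$, and substitutes to isolate $n_{i,j}$. Your write-up is in fact more complete, since you treat the endpoint cases $j=1$ and $j=e_i$ explicitly, whereas the paper only spells out the generic middle formula; one small wording slip in your telescoping paragraph (the coefficient of $n_k$ for $k\ge j+2$ does not ``drop from $j+1$ to $j$'' but rather equals $(j+1)-j=1$ in the difference) is harmless, as your conclusion is stated correctly.
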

The last formula for $n_{i,e_i}$ is given for the sake of completeness: 
in practice, one will never compute every $n_{i,j}$, since 
one would rather directly compute  the nullity of  $P_i^{e_i}(A)$ instead, to
deduce the multiplicity $m_i$ from algorithm~\ref{alg:nullity}.

\subsection{The combinatorial search}
\label{sec:combs}


In the following, we want to determine the values  of the unknown
$n_{i,j}$.
They must satisfy the total degree
equation:
\begin{equation}\label{eq:degree}
n = \sum_id_i\sum_{j=1}^{k_i}jn_{i,j}.
\end{equation}

We can also discriminate potential candidates using the trace:
the degree $n-1$ coefficient of
the characteristic polynomial is the negative of  
the trace of the matrix. Denote by $t_i$ the degree $n-1$ coefficient
of an irreducible factor $P_i =
X^{d_i}+t_iX^{d_i-1}+\ldots$. Then the degree $n-1$ coefficient of $\prod_i
P_i^{m_i}$ is $ \sum_i t_i m_i $. We thus  have the trace test:
\begin{equation}\label{eq:tr}
\text{Tr}(A) = -\sum_i t_i m_i = -\sum_i t_i\sum_{j=1}^{k_i}jn_{i,j}.
\end{equation} 
\newcommand{\branchandcut}{\texttt{Branch-and-Cut}}
In a pure black-box model, the trace can be computed using $n$ matrix-vector
products.  For many sparse or structured matrix representations, a faster method is available as well.

Then it suffices to use e.g. a {\branchandcut} algorithm to compute
all the integer $k$-tuples satisfying both equations (\ref{eq:degree})
and (\ref{eq:tr}).
Of course, if some of the unknowns $n_{i,j}$ are already computed
(e.g. by the nullity method) the set of candidates is accordingly reduced.

The remaining candidates will then be
discriminated by  evaluations of the characteristic polynomial
at random values, i.e. computations of determinants of $\lambda
I-A$ matrices. Indeed, we have efficient
methods of computing the determinant of a black-box matrix (see
e.g. \cite[\S 3.1 Determinant Preserving
Preconditioners]{Turner:2002:these} and references therein).
Algorithm \ref{alg:large:factor} sums up this combinatorial search
strategy.
\newcommand{\completelargefactors}{\texttt{Combinatorial-search}}
\begin{algorithm}
  \dontprintsemicolon
  \caption{\completelargefactors}\label{alg:large:factor}
  \KwData{$A$, an $n\times n$ matrix}
  \KwData{$D=(d_i)_i$, the degrees of the irreducible factors $P_i$ of $\charpoly$}
  \KwData{$M$, a set of precomputed $n_{i,j}$}
  \KwResult{$N=(n_{i,j})$ }
  \Begin{
       \tcc{using degree and trace constraints}
       sol = \branchandcut (A, D, M)\;
      \While{$\# sol > 1$}{
        Pick $\lambda \in K$ at random\;
        $\delta=\text{det}(\lambda I -A)$\;
        Discard any $N \in \text{sol}$ s.t. $\prod (P_i^j)^{n_{i,j}}(\lambda) \neq \delta$\;
      }
      \Return{$N = sol[1]$}
  }
\end{algorithm}

\subsection{Index calculus method}
\label{sec:dlog}

Evaluating equation (\ref{eq:charpol}) at a point $\lambda$ leads to an
equation over the finite field, where the  multiplicities $m_i$ are
the unknowns.
Inspired by index calculus techniques
\cite{Coppersmith:1986:IndexCalculus},
 the idea here is to consider the discrete
logarithm of such an equation (with an arbitrary choice of generator), to
produce a linear equation in the $m_i$. Taking several of these
equations for different
$\lambda_i$ forms a linear system of equations, with dimension $k$,
the number of
unknown multiplicities.

In this discussion the base field is $GF(q)$ and $q$ is sufficiently
large with respect to $n$ as discussed below. 
The characteristic polynomial evaluated at a given value $\lambda$  presents this 
equation in the unknown exponents $m_j$:
\begin{equation}
\prod_{j=1}^k  P_j^{m_j}(\lambda) = \det(\lambda I - A).
\end{equation}

Now, if $\lambda$ is not a root of the characteristic polynomial, taking the
discrete logarithm of these terms for a generator $g$ of the field leads 
to this equation modulo $q-1$:
\begin{equation}\label{eq:log}
\sum_{j=1}^k  m_j\log_g(P_j(\lambda)) \equiv
\log_g(\det(\lambda I - A)) \mod  q-1,
\end{equation}
which is linear in the unknowns $m_j$.
We can therefore build a $l\times k$ linear system by randomly choosing $l$ values
$\lambda_i$. This system is consistent since the multiplicities $m_i$
are a solution vector of this system. If the solution is unique, then it is the
vector of multiplicities over $\Z$.

The computation of this vector can either be done by a dense Gaussian
elimination over the ring $\Z_{q-1}$ or over a finite field $\Z_p$ 
where $p$ is a large prime factor of $q-1$ (larger than $n$).
In this last case, the result will be correct as long as the system
remains nonsingular modulo $p$.



Algorithm \ref{alg:dlog} describes this techniques in more details.

\newcommand{\dlogsys}{\texttt{Index-calculus}}
\begin{algorithm}
  \dontprintsemicolon
  \caption{\dlogsys}\label{alg:dlog}
  \KwData{A, an $n\times n$ matrix over a finite field $K=\text{GF}(q)$}
  \KwData{$P_i$, the irreducible factors of $\minpoly$,}
  \KwData{$S$, the set of indices of the unknown multiplicities,}
  \KwData{$Q = \prod_{j \notin S}P_j^{m_j}$, the partial product of the
    irreducible factors with known multiplicity $m_j$.}
  \KwResult{$\charpoly$  or ``fail"}
\Begin{
    $k=\#S$;  $l=0$; $H=[\ ]$;\; 
    Choose a generator $g$ of $K$\;
    Let $p$ be a prime factor of $q$\;
     \While{$rank(H) < k$}{
       $l=l+1$; \lIf{$l > n$}{\Return{``fail"}}\;
       Choose randomly $\lambda_l \in K$\; 
       \Repeat{ $\gamma \neq 0$ and $\alpha_j\neq 0$, $\forall j\in S$}{
         $\alpha_{l,j}=P_j(\lambda_l)$ for all $j$\;
         $ \gamma_l=Q(\lambda_l)$\;
       }
        \tcc{$H$ is extended to the size $l \times k$}
        Stack the row $[\log_g \alpha_{l,j} \mod (q-1)] \mod p$ to $H$\;
     }
    {$\mathcal K$} = \{ indices of the first $k$ independent rows
    of $H$\}\;
    Set $B$ = a $k\times k$ nonsingular matrix of these rows\;
    Compute $b=[\log_g(\det(\lambda_iI-A)) - \log_g(\gamma_i)]_{i \in {\mathcal K}}$\;
    Solve $Bx=b$\;
    \Return{$P=\prod_{j\in S}P_j^{x_j} Q$}
}
\end{algorithm}

Let $k$ be the number of unknown multiplicities. 
The first step is to find $k$ values $\lambda_i$ forming a non
singular system. Therefore, we propose, in algorithm \ref{alg:dlog},
to evaluate the system at more than $k$ points. The complexity of
forming a row of $G$ costs only $3\sum_{i=1}^k d_i$ arithmetic operations
using Horner's method. Therefore trying as many as $n$ different values for $\lambda$ is
a negligible cost. 
Furthermore, one could use fast multi-point evaluation to get
blocks of rows simultaneously (up to $n$ rows at a cost essentially linear in $n$).

The rank of $H$ is computed all along the process, each new row being
incrementally added to the triangular decomposition of the current matrix.
This Gaussian elimination (performed by the LQUP algorithm
\cite{Ibarra:1982:LSP} for example) 
also provides the indices of the first $k$ linearly independent
rows, and therefore the indices of the convenient $\lambda_i$.
Lastly the vector $b$ is formed, using only $k$ determinant computations.


In practice, it appeared, as in index calculus
\cite{Bender:1998:rdlog,Hess:2007:dlog}, that the
number of rows
required to get a
full rank matrix $B$ is always quite close to $k$.
However, we do not have a proof of this property, and we therefore 
state it as the conjecture
\ref{conj:dlog}. 
\begin{conj}\label{conj:dlog}
Let $A$ be a $n\times n$ matrix over $\mathbb{Z}$ and $P_1\dots P_k$ be the
irreducible factors of its minimal polynomial.
Let $p>n$ be a prime chosen randomly in finite set. Let $q=1+\lambda p$ of the
form $r^k$ where $r$ is a prime number. Let $g$ be a generator of $\text{GF}(q)$
  and $(\lambda_1,\dots,\lambda_n)$  uniformly chosen at random in
  $\text{GF}(q)$.
Let $H=[h_{i,j}]$ where $h_{i,j}=(\log_g(P_j(\lambda_i)) \mod (q-1))\mod p$. 
Then $\text{rank}(H)=k$ with high probability.

\end{conj}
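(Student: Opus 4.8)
The plan is to reduce the full-rank statement to a statement about the rank of the \emph{integer} matrix $M = [m_{i,j}]$ with $m_{i,j} = \log_g(P_j(\lambda_i)) \bmod (q-1)$, and then argue that reducing modulo the large prime factor $p$ of $q-1$ does not drop the rank with high probability. First I would set up the key algebraic observation: since $g$ is a generator of $\mathrm{GF}(q)^\times$, the discrete logarithm is a group isomorphism $\mathrm{GF}(q)^\times \to \Z_{q-1}$, so the $j$-th column of $M$ is the image under this isomorphism of the vector $(P_j(\lambda_1),\dots,P_j(\lambda_n))$. Thus a $\Z_{q-1}$-linear relation $\sum_j c_j m_{i,j} \equiv 0$ for all $i$ is equivalent to $\prod_j P_j(\lambda_i)^{c_j} = 1$ in $\mathrm{GF}(q)$ for all $i$, i.e. the rational function $R_{\mathbf c}(X) = \prod_j P_j(X)^{c_j}$ (a nonzero element of $\mathrm{GF}(q)(X)$ whenever $\mathbf c \neq 0$, because the $P_j$ are distinct irreducibles) takes the value $1$ at every $\lambda_i$. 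This is the heart of the argument and the place where randomness of the $\lambda_i$ enters.

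Second, I would quantify "$R_{\mathbf c}$ takes the value $1$ rarely." Write $c_j = c_j^+ - c_j^-$ with $c_j^\pm \ge 0$; then $R_{\mathbf c}(\lambda) = 1$ forces $\prod_j P_j(\lambda)^{c_j^+} = \prod_j P_j(\lambda)^{c_j^-}$, a polynomial identity whose two sides have degree at most $\sum_j |c_j| d_j$. Away from the roots of the $P_j$ (a set of size at most $\sum_j d_j \le n$, which the $\lambda_i$ avoid with the \texttt{Repeat} loop), the number of solutions $\lambda$ is bounded by that degree. The subtlety is that the coefficients $c_j$ in a would-be kernel vector can be bounded: a minimal dependency among $k \le n$ columns has, by Cramer's rule, entries bounded by $k!$ times a power of $\max_{i,j}|m_{i,j}| \le q$, hence some explicit bound $N(n,q)$. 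So for a \emph{fixed} nonzero $\mathbf c$ with $\|\mathbf c\|_\infty \le N$, the probability that a uniformly random $\lambda \in \mathrm{GF}(q)$ satisfies $R_{\mathbf c}(\lambda) = 1$ is at most $nN/q$. Choosing $q$ large enough relative to $n$ (specifically, $q$ exponentially large in $n\log n$, which the conjecture permits by its freedom in choosing $\lambda$ in $q = 1 + \lambda p = r^k$), a union bound over the (finitely many, at most $(2N+1)^k$) candidate dependency vectors and over the $\binom{n}{k}$ choices of column subsets shows that with high probability \emph{no} nonzero small-norm $\mathbf c$ gives a relation valid at all $n$ sampled points, so $\mathrm{rank}_{\Z_{q-1}} M = k$.

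Third, I would pass from the rank over $\Z_{q-1}$ to the rank over $\Z_p$ of $H = M \bmod p$. Since $\mathrm{rank}_{\Z_{q-1}} M = k$, there is a $k \times k$ minor of $M$ that is a unit in $\Z_{q-1}$, in particular coprime to $p$; reducing that minor mod $p$ gives a nonzero element of $\mathrm{GF}(p)$, so $\mathrm{rank}_{\mathrm{GF}(p)} H = k$ as well. Here the randomness of the prime $p$ is a safety net: even if one only knew $\mathrm{rank}_{\Z} M = k$, a random large prime $p$ divides the relevant nonzero minor with probability $O((\log\det)/(\#\text{primes}))$, which is small — but the cleaner route via the unit minor mod $q-1$ avoids needing this.

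The main obstacle I expect is making the norm bound $N(n,q)$ on candidate kernel vectors rigorous and simultaneously compatible with "$q$ large enough": one needs $q$ to beat $nN(n,q)$ after the union bound, and $N$ itself depends on $q$ through the entries $m_{i,j} < q$, so the inequality to be satisfied is roughly $q \gg n \cdot k! \cdot q^{k-1}$, which is \emph{false} for fixed $k$. This is exactly why the naive bound is insufficient and why the statement is only a conjecture: the discrete-log values $m_{i,j}$ behave pseudorandomly in $\Z_{q-1}$, and the heuristic (borrowed from index calculus, \cite{Bender:1998:rdlog,Hess:2007:dlog}) is that a random-looking $n \times k$ matrix over $\Z_{q-1}$ with $p \mid q-1$, $p > n$, has rank $k$ mod $p$ with overwhelming probability — a statement provable for genuinely uniform entries but not known to transfer to the structured entries $\log_g(P_j(\lambda_i))$. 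Accordingly, a complete proof seems out of reach with these techniques, and the honest course (which the paper takes) is to isolate the pseudorandomness assumption as Conjecture \ref{conj:dlog} and prove only the conditional reductions above.
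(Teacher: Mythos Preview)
Your proposal is correct in its conclusion: the statement is a conjecture, and the paper does not prove it either. The paper's own treatment is a brief heuristic --- the entries $\log_g(P_j(\lambda_i))$ are argued to be pseudorandom in $\Z_{q-1}$, and for a matrix of genuinely uniform entries the cited result of Bl\"omer--Karp--Welzl gives singularity probability $O(1/p)$ after reduction modulo a prime $p \mid q-1$. You take a longer and more explicit route: you translate a would-be dependency into the statement that the rational function $\prod_j P_j(X)^{c_j}$ equals $1$ at every sample point, attempt a Schwartz--Zippel count together with a union bound over candidate dependency vectors, and then correctly locate the obstruction --- the exponents $c_j$ (and hence the relevant degree) scale with $q$, so the bound $nN/q$ is vacuous. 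This buys a sharper picture of \emph{why} the naive counting fails (it is off by a factor of roughly $q^{k-1}$), at the cost of more machinery than the paper's two-line appeal to pseudorandomness; both end at the same place, namely that the pseudorandomness of discrete logarithms of polynomial values is the irreducible assumption.

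One small technical slip in your step three: over the non-domain $\Z_{q-1}$, column independence does not in general force some $k\times k$ minor to be a unit (consider the single column $(2)$ over $\Z_6$, which has trivial annihilator but non-unit minor). The correct passage to $\Z_p$ is via the Chinese remainder theorem: a nontrivial dependency mod $p$ lifts, when $p \parallel q-1$, to a nontrivial dependency mod $q-1$. This is moot, of course, since you already abandon the argument at step two.
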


Informally, in our system the
evaluations at the $\lambda_i$ are independent and can be
considered as seeds for the pseudo-random generator of taking the
discrete logarithm of the polynomial evaluation. 
Therefore the entries of the system modulo $q-1$ 
are at least close to random entries as soon as the polynomials are
distinct.
Would they be true random values,  the singularity/nonsingularity of the matrices
would follow the analysis of e.g. \cite[Corollary
2.4]{Blomer:1997:rank}: 
if $L$ is a square matrix of uniformly random 
entries modulo $q-1$ and $p$ is a prime  diving $q-1$, then the
probability that $L \mod p$ is singular is of order $\frac{1}{p}$.

\begin{thm}Assuming conj. \ref{conj:dlog},
algorithm \ref{alg:dlog} is correct and its asymptotic complexity is 
$\GO( k n \Omega )$ where $\Omega\geq n$ is the cost of a multiplication of
$A$ by a vector.
\end{thm}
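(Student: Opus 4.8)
The plan is to establish correctness and complexity separately, since these are essentially independent concerns given the structure of Algorithm \ref{alg:dlog}.

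\paragraph{Correctness.} First I would argue that whenever the algorithm returns a polynomial (rather than ``fail''), that polynomial is $\charpoly$. The key observation is that equation (\ref{eq:log}) is satisfied by the true multiplicity vector $(m_1,\dots,m_k)$ for every $\lambda_l$ that is not a root of $\charpoly$ and not a root of any $P_j$ --- this is exactly why the inner \texttt{Repeat} loop rejects $\lambda_l$ for which some $\alpha_{l,j}=0$ or $\gamma_l=0$ (so that the discrete logarithms are defined), and one also needs $\det(\lambda_l I - A)\neq 0$, i.e. $\lambda_l$ not an eigenvalue; I would note that this last condition is automatically implied since $\det(\lambda I-A)=\prod_j P_j(\lambda)^{m_j}$ vanishes only when some $P_j(\lambda)=0$. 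Hence the linear system $Bx = b$ built from $k$ of these rows is consistent and has the true multiplicity vector among its solutions. Since $B$ is chosen to be nonsingular over $\Z_p$ (with $p>n$ a prime factor of $q-1$), the solution $x$ modulo $p$ is \emph{unique}; and because each true multiplicity $m_j$ satisfies $0 \le m_j \le n < p$, the vector $x$ reduced into $\{0,\dots,p-1\}$ equals the true multiplicity vector exactly, not merely modulo $p$. Therefore $P = \prod_{j\in S}P_j^{x_j}\,Q = \prod_j P_j^{m_j} = \charpoly$. This part is unconditional; the conjecture is only invoked to guarantee that the \texttt{While} loop does in fact reach $\mathrm{rank}(H)=k$ before $l$ exceeds $n$, so that the algorithm terminates successfully with high probability rather than returning ``fail''.

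\paragraph{Complexity.} I would then bound the cost of each phase. The factorization data $P_i$ and known multiplicities are given as input. The loop runs at most $n$ times. Each iteration does: $\GO(\sum_i d_i) = \GO(n)$ field operations to evaluate all $P_j(\lambda_l)$ and $Q(\lambda_l)$ by Horner's rule; at most a constant expected number of repetitions of the \texttt{Repeat} body (since a random $\lambda_l$ avoids the $\GO(n)$ forbidden values with probability $1-\GO(n/q)$, and $q$ is large); $k$ discrete-logarithm computations, which over $\text{GF}(q)$ with $q$ of the stated special form cost a fixed amount we absorb into lower-order terms (or are precomputed); and one rank-update of the incrementally maintained triangular factorization of $H$, costing $\GO(k^2)$ via LQUP. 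Summing over $\le n$ iterations gives $\GO(n k^2 + n^2)$ for this phase. The dominant cost is forming the right-hand side $b$: this requires $k$ evaluations of $\det(\lambda_i I - A)$, and by the cited black-box determinant algorithms each costs $\GO(n\Omega)$ field operations, for a total of $\GO(kn\Omega)$. Solving the $k\times k$ system $Bx=b$ costs $\GO(k^3)$. Since $k \le n$ and $\Omega \ge n$, all the polynomial-in-$n$ terms $n^2$, $nk^2$, $k^3$ are dominated by $\GO(kn\Omega)$, which is the claimed bound.

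\paragraph{Main obstacle.} The routine calculations are genuinely routine here; the subtle point to get right is the interface between the conjecture and the correctness claim --- namely, being careful that the conjecture is used \emph{only} for termination/success (reaching full rank within $n$ trials), while the ``return implies correct'' direction holds unconditionally. A second delicate point, which I would state explicitly, is the justification that reducing the solution of the modular system into $[0,p-1]$ recovers the integer multiplicities exactly, which relies on the bound $p>n\ge m_j$ and on consistency of the system over $\Z_{q-1}$ lifting correctly; one should also remark that taking the system modulo a single large prime factor $p$ of $q-1$ (rather than over all of $\Z_{q-1}$) is valid precisely because that bound makes the CRT recombination trivial. Everything else is bookkeeping of costs.
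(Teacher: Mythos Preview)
Your proposal is correct and follows essentially the same approach as the paper: the dominant cost is the $k$ black-box determinant evaluations at $\GO(n\Omega)$ each, and all auxiliary costs (row construction via Horner, incremental triangularization, final solve, discrete logs) are shown to be absorbed into $\GO(kn\Omega)$ using $k\le n$ and $\Omega\ge n$. Your correctness argument is actually more explicit than the paper's---the paper's own proof concentrates on the complexity bookkeeping and leaves the ``unique solution mod $p>n$ recovers the integer multiplicities'' observation to the surrounding discussion---and your linear-algebra bounds ($\GO(k^2)$ per rank update, $\GO(k^3)$ for the solve) are cruder than the paper's $\GO(lk^{\omega-1})$ but lead to the same conclusion; the only place you wave hands slightly more than the paper is the discrete-log cost, which the paper handles by tabulation in $\GO(q)$ memory.
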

\begin{proof} Let $l$ be the
  number of rows required to get an invertible system, $l \geq k$.
Each determinant computation requires $\GO(n)$ application of $A$ to a vector
\cite[\S 3.1]{Turner:2002:these}. Building each row of the matrix requires a
Horner like evaluation of a polynomial with total degree less than $n$,
it therefore costs $\GO( n )$ operations. Triangularization of $G$
requires $\GO{(l k^{\omega-1})}$ operations. Solving the system $Bx=b$, knowing
the triangular decomposition of $B$ requires $\GO( k^2 )$
operations. The discrete logarithms can be tabulated
\cite{jgd:2004:dotprod} with $\GO(q)$ memory
(or to avoid this extra memory, one can compute the whole
sequence of powers of a generator of $K$, sort the matrix and vector
entries and find the correspondences with some $\sO(lk+q)$
extra field operations). 
The overall complexity is thus 
$\GO( k n \Omega + ln + lk^{\omega-1})$ which is $O(k n \Omega)$ 
when $l =\GO(n\Omega k^{2-\omega})$. 
\end{proof}
The algorithm stops arbitrarily when $l = n + 1$.  We see here that a larger $l$
is acceptable for the complexity result, but in our experiments a very small $l$
(e.g. $l \approx k$) always suffices. 
In the following sections, this algorithm will be used with
$k<\sqrt{n}$, thus giving an expected $\GO(n^{1.5}\Omega)$ complexity.

\section{Adaptive black-box algorithm over a finite field}\label{sec:adap}

We show in the present section how to combine the ideas of the
previous section together with already existing techniques to form an adaptive
algorithm computing the characteristic polynomial of a black-box matrix
over a finite field. The algorithm is adaptive in the sense of
\cite{jgd:2006:AHA}, meaning that it chooses the best variant depending
on discovered properties of its input.

We first combine the nullity method with the combinatorial search.
We then show  an algorithm improving on the asymptotic
complexity. Finally we give some improvements which are efficient in practice 
on typical matrices.

\subsection{Nullity method and combinatorial search}

These two algorithms are complementary: the nullity is efficient for the
determination of the multiplicites of factors of small degree, whereas the
combinatorial search is adapted to the large degree factors.

More precisely, algorithm \ref{alg:nullcombs} sorts the list of the unknown occurences
$n_{i,j}$ according to the increasing $jd_i$. The nullities are then computed
until there remain fewer than a fixed number $T$ of unknowns to be determined by 
combinatorial search.

\newcommand{\nullcomb}{\texttt{Nullity-comb-search}}

\begin{algorithm}
  \dontprintsemicolon
  \caption{\nullcomb}
  \label{alg:nullcombs}
      \KwData{A: an $n\times n$ matrix over a finite field,} 
      \KwData{$P_i$: the irreducible factors of $\minpoly$,}
      \KwData{$T$: a static threshold}
      \KwResult{$m_i$: the multiplicities of each $P_i$ in $\charpoly$.}
     \Begin{
         $\mathcal{E} = \{ (i,j) / i = 1\dots k, j = 1 \dots e_i\}$\;
         Sort $\mathcal{E}$ according to the increasing values of $jd_i$\;
         \While{($\# \mathcal{E} > T$)}{
           Pop $(i,j)$ from $\mathcal{E}$\;
           Compute $\nu_{i,j} = n- \text{rank}(P_i^j(A))$\;
         }
         \For{$i =  1 \dots k$}{
           Let $j_i$ be the largest index s.t.  $\nu_{i,j_i}$ is computed  \;
           \If{$j_i<e_i$}{
             Compute $\nu_{i,j_i+1}= n- \text{rank}(P_i^{j_i+1}(A))$\;
             }
           \lFor{$k =1 \dots j_i$}{Compute $n_{i,k}$  using cor. \ref{cor:occ}}\;
         }
         \completelargefactors$(A,(d_1,\dots,d_k),(j_1,\dots,j_k))$ \;
         \lFor{$i=1\dots k$}{$ m_i = \sum_{j=1}^{e_i}{jn_{i,j}}$\;}
         \Return{$m=(m_1,\dots,m_k)$}
       }
\end{algorithm}
The combinatorial search has exponential complexity. The threshold
$T$ must be small. In experiments, we found that $T=5$ was the best choice for
various matrices.
In the case of numerous factors with large degree, this approach is of reduced effeciveness. We propose in the next section how to combine it with a third algorithm.

\subsection{Nullity method and system resolution} \label{sec:improv}

The index calculus method also enables the design of a hybrid
algorithm. If the multiplicities of some factors have already been computed by
another method, we can limit the system to the unknown multiplicities only, thus
reducing its dimension.  

Suppose that the multiplicities $m_i$ of the factors
$P_i$ for $i \in \mathcal{C}$ are already known, then equation (\ref{eq:log}) reduces to 
{\small
\begin{equation}\label{eq:log2}
  \begin{split}
    \sum_{j \notin \mathcal{C}}  m_j\log(P_j(\lambda_i)) \equiv&
    \log(\det(\lambda_i I - A))\\
    &-\sum_{j\in \mathcal{C}} m_j\log(P_j(\lambda_i))\mod q-1.
  \end{split}
\end{equation}
}

A first simple hybrid approach is the following: the method of
the nullity  (section \ref{sec:nullity}) is applied to every degree one factor
with multiplicity one in the minimal polynomial, and the remaining factors 
are left to the index calculus method.

This approach is always worthy since the computation of the rank of $P_j(A)$,
for a degree one polynomial $P_i$ is cheaper than the computation of
$det(\lambda_iI-A)$ \cite{jgd:2002:villard}.


A second hybrid approach also introduces a combinatorial search to this
algorithm: the nullity method still handles the $t$ degree one factors as
previously. 
The remaining factors $P_i$ are sorted by decreasing degree. For a
convenient choice of $s$, a list of every possible assignment for the
multiplicities of the first $s$ factors is determined, using a combinatorial search.
Then for each partial assignment, the multiplicities of the remaining factors
are determined by the resolution of an index calculus system of the form:
{\small \begin{equation}
\begin{split}
 \sum_{j=s+1}^{k-t}  m_j\log(P_j(\lambda_i)) \equiv
\log_g(\det(\lambda_i I - A)) \\
- \log_g \left(\prod_{j=1}^s P_j^{m_j}(\lambda_i)\right) \mod q-1 ~\forall i .
\end{split}
\end{equation}
}

For each partial assignment, the system resolutions share the same matrix
$B$. Therefore the expensive part of it, namely the Gaussian elimination, can be
performed only once  at cost $\GO(m^3)$, where $m=k-s-t$.
There only remains to solve two triangular systems (in $\GO(m^2)$)
for each possible assignment. 
Lastly, the assignments will be discriminated against each other by a test on
the total degree. 
To sum up, this techniques makes it possible to balance the cost of the
computations of determinants, and the cost of the system solving, by
reducing the dimension of the system, but increasing the dimension of its right
hand side.
The most appropriate value for $s$ has to be determined dynamically, according
to the number of possible assignments induced, and using an estimate of the cost
function of this algorithm: e.g.
$$2mn\Omega+\frac{2}{3}m^3+4m^2 \tau_s$$
where $\tau_s$ denote the number of possible assignments for a chosen subset of
$s$ factors.





\subsection{Index calculus and k$^{th}$ invariant}

The best known black-box algorithm to compute the Frobenius normal form over a
field is given by Villard in \cite{Villard:2000:Frob}. It is proved that computing the
$k$th invariant factor of a matrix reduces to the computation of a minimal
polynomial of the input matrix with a rank $k$ additive perturbation.
Using a binary search technique, the algorithm only performs $\mu\text{log}(n)$
such computations, where $\mu$ is the number of distinct invariant factors of the matrix.
Since $\mu$ is smaller than $3\sqrt{n}/2$ and  an invariant factor can
be recovered using $\GO(n)$ matrix vector products, this corresponds
to a total number of $\GO(n^{3/2}\text{log}(n))$ matrix-vector products and an
additional cost of $\GO(n^{5/2}\text{log}^2(n)\text{loglog}(n))$ arithmetic
operations.

We propose in algorithm \ref{alg:spcharpoly1} an alternative approach combining
the index calculus method with computations of individual
invariant factors. 

\begin{algorithm}
  \dontprintsemicolon
  \caption{\texttt{black-box-charpoly} \label{alg:spcharpoly1}}
  \KwData{A: an $n\times n$ matrix over a finite field K}
  \KwResult{$\charpoly$  or ``fail"}
  \Begin{
      $f_1 = \texttt{InvFact}(1)$ \;
      Factor $f_1 = \prod_{i=1}^k P_i^{e_i}$ using Cantor-Zassenhaus\;
      Set $S=\{P_1,\dots,P_k\}$ and $j=2$\;
      \While{$(\#S > \sqrt{n})$}{
        $f_j = \texttt{InvFact}(j)$ \;
        \ForAll{$P_i \in S$}{
          Compute $\alpha$ s.t. $\text{gcd}(P_i^{e_i},f_j)=P_i^\alpha$\;
          \lIf{$\alpha = 0$} {$S = S \backslash \{P_i\}$\;}
              {$m_i \text{ += } \alpha$\;}
        }
      }
      \Return{\dlogsys $(A, (P_i), S, \prod_{j \notin S}P_j^{m_j})$}
    }
\end{algorithm}

The idea is to reduce the dimension of the index calculus system to $\sqrt{n}$ 
by computing a few of the first invariant factors of the matrix.

After each computation of an invariant factor $\Phi$, the multiplicity of
each irreducible polynomial $P_i$ is updated, and those $P_i$ that are no longer
in $\Phi$ are removed from the list of the factors with unknown multiplicity.

The $\texttt{while}$ loop is executed at most $\sqrt{n}$ times. Otherwise, there
would be more than $\sqrt{n}$ invariant factors having more than $\sqrt{n}$
irreducible factors, and the total degree would be larger than $n$.

Now the condition of exit for this loop ensures that the order of the
linear system will be smaller than $\sqrt{n}$. Therefore only $\sqrt{n}$
determinants will be computed and the overall number of blackbox matrix-vector products 
is $\GO(n\sqrt{n})$

The remaining multiplicities are then determined by the
index calculus method  described previously,
requiring at most $\GO(n^{3/2} \Omega) $ applications of the matrix to a vector.

Under the conditions of validity for the index calculus algorithm, this
heuristic improves on the computation time of Villard's algorithm by a logarithmic factor.

\section{Lifting over the integers}\label{sec:spint}

Storjohann gives in \cite{Storjohann:2000:Frob} a method for the
computation of the Frobenius normal form of a black-box integer
matrix. It is based on a computation of the minimal polynomial over
$\Z$ and on a computation of the Frobenius normal form over a prime
field. Then a gcd-free basis for the invariant factors over $\Z_p[X]$
is computed and lifted over $\Z[X]$. 

We use the same idea but just for the characteristic polynomial, and
not for all the invariant factors. It is
thus simpler since we don't need to ensure that the Frobenius form
of $A$ modulo $p$ equals the integer Frobenius form reduced modulo
$p$. We just need ensure that the minimal and characteristic 
polynomials of $A$ modulo
$p$ equal the minimal and characteristic polynomials over the integers
reduced modulo $p$.

The goal of the following algorithm \ref{alg:intcharp}
is to compute the
integer characteristic polynomial from the integer minimal polynomial
and the characteristic polynomial modulo some prime $p$, obtained via
the previous sections. Algorithm \ref{alg:intcharp} is just a
simplification of that of \cite{Storjohann:2000:Frob}:

\begin{algorithm}
  \dontprintsemicolon
  \caption{\texttt{Gcd-free lifting of the characteristic
      polynomial}\label{alg:intcharp}}
  \KwData{A: an $n\times n$ integer matrix, $\minpoly$ its minimal
    polynomial over $\Z$,}
  \KwData{$p$: a prime number,}
  \KwData{$\overline{\charpoly}$ the characteristic polynomial of $A \mod p$,}
  \KwResult{$\charpoly$ the characteristic polynomial of $A$ over $\mathbb{Z}$}
  \Begin{
      Compute $S$ the squarefree part of $\minpoly$\;
      $\overline{S} = S\mod p$,       $\overline{\minpoly} = \minpoly \mod p$\;

      Compute a modular gcd-free basis $(\bar{g}_1,\ldots,\bar{g}_\chi)$
      of $(\overline{S}, \overline{\minpoly}, \overline{\charpoly})$, together with exponents $(\mu_1,\ldots,\mu_\chi)$
      such that $\overline{\charpoly} = \prod \bar{g}_i^{\mu_i}$\;
      Apply Hensel lifting on the basis $(\bar{g}_1,\ldots,\bar{g}_\chi)$ to produce
      $(g_1,\ldots,g_\chi)$ so that $S \equiv g_1 \ldots g_\chi \mod p^k$
      and $(g_1,\ldots,g_\chi) \equiv (\bar{g}_1,\ldots,\bar{g}_\chi) \mod p$\;
      \Return{$\charpoly = \prod g_i^{\mu_i}$}
    }
\end{algorithm}
The integer minimal polynomial is computed via \cite[Theorem
3.3]{jgd:2001:jsc} with an $\GO(sd\Omega)$ probabilistic complexity,
where $s$ is the size of its integer coefficients and $d$, its
degree.
The characteristic polynomial modulo $p$ is computed via algorithm
\ref{alg:spcharpoly1} with an $\GO(n^{1.5}\Omega)$ complexity and
$\sO(n^{2.5})$ extra field operations.
Then the squarefree part \cite{Gerhard:2001:sqrfree} and 
the Hensel lifting of the gcd-free basis takes $\sO( n k )$
word operations with fast integer and  polynomial arithmetic
\cite[Theorem 15.18]{VonzurGathen:1999:MCA}.

The size of the coefficients of the integer minimal polynomial is bounded in the worst case by
$s \leq \frac{n}{2}(\log_2(n)+\log_2(||A||^2)+0.212)$ where $||A||$ is the
largest entry in absolute value of the matrix $A$, see \cite[Lemma 2.1]{jgd:2007:jipam}.
When the matrix entries are of constant size, $s=\sO(n)$  and
as the
degree of the minimal polynomial is bounded by $n$, the dominant
asymptotic cost is that of the integer minimal polynomial computation. 
This result is already in \cite{Storjohann:2000:Frob}.
In practice, however, the coefficients of the minimal
polynomial are often much smaller than the bound and than those of the
integer characteristic polynomial. Furthermore, the degree of the
minimal polynomial can be extremely small, especially for structured
or sparse matrices (see e.g. homology matrices
in \cite{jgd:2001:jsc}). In those cases, the dominant cost will 
be the computation of the characteristic polynomial modulo $p$.
Then, our algorithm enables faster computations since a factor of
$\sqrt{n}$ has been gained, as illustrated in table \ref{table:timings}.

A supplemental constraint can be introduced by 
computing det( $\lambda I - A$),
at random integer $\lambda$. 
Using e.g. \cite[Theorem 4.2]{jgd:2006:det} with \cite{Eberly:2006:sparse},
$\GO(\sqrt{n})$ of these 
can be done to speed-up the modular adaptive search,
without increased complexity.

\section{Experimental comparisons and applications}\label{sec:exp}

We have implemented some of the algorithms presented in the previous sections using the
\texttt{LinBox}\footnote{\url{www.linalg.org}} library for the black-box
computation of minimal polynomials, ranks and determinants.
In a first approach, we replaced the computation of the gcd-free basis of
algorithm \ref{alg:intcharp} by a factorization into irreducible factors, using
Hensel lifting. This algorithm is more expensive in the worst case, but the
efficient implementation by \texttt{NTL}\footnote{\url{www.shoup.net/ntl}} 
makes it practicable in numerous cases.

This work was partly motivated by an application from graph theory.
For a graph $X$ on $n$ vertices with vertex set $V(X)$ and edge set
$E(X)$ the $k$-th symmetric power $X^{k}$ is the graph with the
${n \choose k}$ $k$-subsets of $V(X)$ as vertices and with two such
$k$-subsets adjacent if their symmetric difference is in $E(X)$.

Graph theorists are interested in the spectrum of such graphs (defined as the
spectrum of their adjacency matrix) since they are closely related to the
description of their isomorphism class. 
More precisely, if a certain power $k$ is found, such that the symmetric $k$th 
power of a graph describes its isomorphism class, this would provide a polynomial
algorithm to solve the graph isomorphism problem. 

This motivated the team of Audenaert, Godsil, Royle and Rudolph to study in
\cite{Royle:2007:symm} the spectrum of  symmetric powers of a class of graphs:
the strongly regular graphs. They prove that there exist infinitely many graphs
having co-spectral symmetric squares. But concerning the symmetric cubes, no
pair of graph is known to have co-spectral symmetric cubes until now.

We helped Royle to investigate further the computation of the characteristic
polynomial of the symmetric cubes of strongly regular graphs. 
He was able to test the first 72 cases corresponding to the graphs with fewer
than 29 vertices. Using our implementations available in \texttt{LinBox}, we
have been able to test the $36\,582$  graphs with fewer than 36 vertices and
check that there is no pair of graphs among them having cospectral symmetric
cubes.

We used the matrices of this application to benchmark the implementations of the
previously presented algorithms.
These matrices are sparse and symmetric, and therefore especially suited to black-box
computations.
Moreover, several parameters such as the degrees of their
minimal polynomials or the average number $\omega$ of nonzero elements per row
vary among the matrices. The matrices \texttt{EX1, EX3, EX5} correspond
respectively to the symmetric cubes of the strongly regular graphs with
parameters (16,6,2,2), (26,10,3,4) and (35,16,6,8). Their dimensions are
respectively
$560 = \binom{16}{3}$, $2600 =
\binom{26}{3}$ and $6545 = \binom{35}{3}$.
The matrices \texttt{EX2} and \texttt{EX4} correspond to different graphs but
with similar parameters as \texttt{EX1} and \texttt{EX3}.

All the matrices used in the experiments, including adjacency matrices
of the symmetric powers, are avaible on-line in the {\em Sparse Integer
Matrices
Collection\footnote{\url{ljk.imag.fr/membres/Jean-Guillaume.Dumas/SIMC}}}.
In particular we used, in the following tables, matrices from the
\url{SPG}, \url{Forest}, \url{Trefethen} and  \url{Homology} sections of the
collection. When the tested matrix was not square, we considered the
square matrix obtained by padding it with zeroes.

\begin{table}[hptb]
\begin{center}
{\small
\begin{tabular}{lccccc}
\toprule
Matrix           &  EX1 & EX2 & EX3 & EX4 & EX5 \\ 
\midrule
$n$: dimension   & 560  & 560 & 2600 & 2600 & 6545 \\
$d$: deg ($P_\text{min})$ & 54  & 103 & 1036 & 1552 & 2874  \\
$\omega$: sparsity & 15.6 & 15.6 & 27.6 & 27.6 & 45.2\\
\midrule
$\Z$-\texttt{Minpoly} & 0.11s & 0.26s & 117s & 260s & 5002s  \\
\midrule
$\Z[X]$ factorize& 0.02s & 0.07s & 9.4 & 18.15 & 74.09s \\
\midrule
Nullity/comb.& \textbf{3.37s} & 5.33s & \textbf{33.2s} & \textbf{30.15s} &\textbf{289s} \\
Total& \textbf{3.51s} & 5.66s & \textbf{159.4s} & \textbf{308.1s} & \textbf{5366s}\\
\midrule
Index calc.  & 3.46s & \textbf{4.31s} & 64.0s & 57.0s & 647s\\
Total&  3.59s & \textbf{4.64s} & 190.4s & 336.4s & 5641s\\
\bottomrule
\end{tabular}
}
\end{center}
\caption{Computation time for tasks of the integer adaptive algorithm on a
  Pentium4 (3.2 GHz; 1 Gb)}\label{table:timings} 

\end{table}

We report in table \ref{table:timings} the computation time of the
different modules described in this paper. For each matrix, two computations are
compared: they share the computation of the minimal polynomial over $\Z$. Then
the determination of the multiplicities is done either by the combination of the
nullity algorithm and the combinatorial search (with the threshold $T$ set to 5), or by the index calculus method.

We first note that the determination of the multiplicities may be the dominant
operation when the degree of the minimal polynomial is small, as for the matrix
\texttt{EX1}. This makes the motivation for this study obvious.
For this task either method, nullity or resolution of the logarithmic system,
can be the most competitive option, depending on the structure of the irreducible
factors. This advocates for the adaptive approach of algorithm
\ref{alg:spcharpoly1} combining both methods, and the computation of the
$k$th invariant factor.

In order to emphasize the improvement of the black-box determination of the
multiplicities over dense methods, we now compare it to the alternative technique presented in
\cite[\S 4.2.2]{jgd:2005:charp}. This also relies on the computation of the
minimal polynomial in $\Z[X]$ and its decomposition into irreducible factors. But
the multiplicities are then obtained using one \emph{dense} computation of the characteristic
polynomial in a randomly chosen finite field. It is
therefore not anymore a black-box algorithm. 
We will denote it by \texttt{dchar}.  \texttt{Comb} is the nullity-combinatorial
search algorithm, and \texttt{ind} is the index calculus method. $A$ denotes \texttt{08blocks}, $B$ is \texttt{ch5-5.b3}, and $T$ is \texttt{Tref500} from the Sparse Integer Matrices Collection.

\begin{table}[htbp]
\begin{center}
\begin{tabular}{cccccc}
\toprule
Matrix & n & $\omega$ & \texttt{dchar} & \texttt{null-comb} &
\texttt{ind.}\\
\midrule
$A$ & 300 & 1.9  & 0.32s & 0.08s & \textbf{0.07s}\\
$AA^T$          & 300 & 2.95 & 0.81s & \textbf{0.12s} & \textbf{0.12s}\\
$B$   & 600 & 4    & 4.4s  & \textbf{1.52s} & 1.97s\\
$BB^T$          & 600 & 13   & \textbf{2.15s}  & 3.96  &7.48s \\
\texttt{TF12}         & 552 & 7.6  & 6.8s   & \textbf{5.53s} & 5.75s\\
\texttt{mk9b3}        & 1260 & 3   & 31.25s & \textbf{10.51s} & 177s \\ 
\texttt{Tref500}      & 500 & 16.9 & 65.14s & \textbf{25.14s} & 25.17s\\
\bottomrule
\end{tabular}
\end{center}
\caption{Integer black-box approach for 
  multiplicities on an Athlon (1.8 GHz; 2 Gb)}
\label{tab:ciavsbn}
\end{table}

Table \ref{tab:ciavsbn} shows the improvement of the black-box approach for
several matrices coming from
different applications. Once again, the structure of the irreducible factors of
the minimal polynomials cause various behaviors for each variant. For example
the times of 
\dlogsys\xspace are similar to those of 
\nullcomb, sometimes better but also sometimes much slower, as
for the matrices $BB^T$ and \texttt{mk9b3}.

\section{Conclusion}
We developed several ways to recover the multiplicities of the factors of the 
characteristic polynomial  from a factorization of the minimal
polynomial. 
Over a finite field  hybrid heuristics are proposed, that compete with
the best theoretical complexity.
Over the ring of integers, our approach enables fast
computations particularly when the coefficients or degree of the minimal
polynomial are small.
This is illustrated on a family of strongly regular graphs, in order to verify
 that there are no symmetric co-spectral cubes.

Further studies on the theoretical complexity remain to be done, and could lead
to better implementations in practice. In particular, a recent algorithm for
dense matrices \cite{Pernet:2007:charp} might be adapted for black-box
matrices. 
In this regard, extending the block projections of
\cite{EGGSV:2007:BBinv} to the case of similarity transformations would play a
crucial role.


{\small

}

\begin{thebibliography}{10}

\bibitem{Royle:2007:symm}
K.~Audenart, C.~Godsil, G.~Royle, and T.~Rudolph.
\newblock Symmetric squares of graphs.
\newblock {\em Journal of Combinatorial Theory}, 97(1):74--90, Jan. 2007.

\bibitem{Bender:1998:rdlog}
R.~L.~Bender, and C.~Pomerance.
\newblock Rigorous Discrete Logarithm Computations in Finite
        Fields via Smooth Polynomials.
\newblock {\em Studies in Advanced Mathematics}, American Mathematical
        Society and International Press, 7, 1998.

\bibitem{Blomer:1997:rank}
J.~Bl{\"o}mer, R.~Karp, and E.~Welzl.
\newblock The rank of sparse random matrices over finite fields.
\newblock {\em RSA: Random Structures \& Algorithms}, 10, 1997.

\bibitem{Coppersmith:1986:IndexCalculus}
D.~Coppersmith, A.~M. Odlyzko, and R.~Schroeppel.
\newblock Discrete logarithms in {GF}(p).
\newblock {\em Algorithmica}, 1(1):1--15, 1986.

\bibitem{jgd:2006:AHA}
V.-D. Cung, V.~Danjean, J.-G. Dumas, T.~Gautier, G.~Huard, B.~Raffin,
  C.~Rapine, J.-L. Roch, and D.~Trystram.
\newblock Adaptive and hybrid algorithms: classification and illustration on
  triangular system solving.
\newblock In {\em {Transgressive Computing} 2006}, pages 131--148, Apr. 2006.

\bibitem{jgd:2004:dotprod}
J.-G. Dumas.
\newblock Efficient dot product over finite fields.
\newblock In {\em {CASC}'2004}, pages 139--154, July 2004.

\bibitem{jgd:2007:jipam}
J.-G. Dumas.
\newblock Bounds on the coefficients of the characteristic and minimal
  polynomials.
\newblock {\em Journal of Inequalities in Pure and Applied Mathematics},
  8(2):art. 31, 6 pp, Apr. 2007.

\bibitem{jgd:2005:charp}
J.-G. Dumas, C.~Pernet, and Z.~Wan.
\newblock Efficient computation of the characteristic polynomial.
\newblock In {\em {ISSAC}'2005}, pages 140--147, July 2005.

\bibitem{jgd:2001:jsc}
J.-G. Dumas, B.~D. Saunders, and G.~Villard.
\newblock On efficient sparse integer matrix {Smith} normal form computations.
\newblock {\em Journal of Symbolic Computation}, 32(1/2):71--99, July--Aug.
  2001.

\bibitem{jgd:2006:det}
J.-G. Dumas and A.~Urba\'nska.
\newblock An introspective algorithm for the determinant.
\newblock In {\em {Transgressive Computing} 2006}, pages 185--202, Apr. 2006.

\bibitem{jgd:2002:villard}
J.-G. Dumas and G.~Villard.
\newblock Computing the rank of sparse matrices over finite fields.
\newblock In {\em {CASC}'2002}, pages 47--62, Sept. 2002.

\bibitem{Eberly:2000:BBFDOSF}
W.~Eberly.
\newblock Black box frobenius decomposition over small fields.
\newblock In {\em {ISSAC}'2000}, Aug. 2000.

\bibitem{Eberly:2006:sparse}
W.~Eberly, M.~Giesbrecht, P.~Giorgi, A.~Storjohann, and G.~Villard.
\newblock Solving sparse rational linear systems.
\newblock In {\em {ISSAC}'2006}, pages 63--70, July 2006.

\bibitem{EGGSV:2007:BBinv}
W.~Eberly, M.~Giesbrecht, P.~Giorgi, A.~Storjohann, and G.~Villard.
\newblock Faster inversion and other black box matrix computations using
  efficient block projections.
\newblock In {\em {ISSAC}'2007}, pages 143--150, Jul. 29 -- Aug. 1 2007.

\bibitem{Gantmacher:1959:TMone}
F.~R. Gantmacher.
\newblock {\em The Theory of Matrices}.
\newblock Chelsea, New York, 1959.

\bibitem{VonzurGathen:1999:MCA}
J.~v. Gathen and J.~Gerhard.
\newblock {\em Modern Computer Algebra}.
\newblock Cambridge University Press, 1999.

\bibitem{Gerhard:2001:sqrfree}
J.~Gerhard.
\newblock Fast modular algorithms for squarefree factorization and hermite
  integration.
\newblock {\em Applicable Algebra in Engineering Communication and Computing},
  11(3):203--226, 2001.

\bibitem{Hess:2007:dlog}
F.~He{\ss}. 
\newblock Computing relations in divisor class groups of algebraic
curves over finite fields.
\newblock Technical report, 2007.
\newblock {\small \url{www.math.tu-berlin.de/~hess/personal/dlog.ps.gz}}

\bibitem{Ibarra:1982:LSP}
O.~H. Ibarra, S.~Moran, and R.~Hui.
\newblock A generalization of the fast {LUP} matrix decomposition algorithm and
  applications.
\newblock {\em Journal of Algorithms}, 3(1):45--56, Mar. 1982.

\bibitem{Kaltofen:1998:Open}
E.~Kaltofen.
\newblock Challenges of symbolic computation: My favorite open problems.
\newblock {\em Journal of Symbolic Computation}, 29(6):891--919, June 2000.

\bibitem{Kaltofen:1991:SSLS}
E.~Kaltofen and B.~D. Saunders.
\newblock On {Wiedemann's} method of solving sparse linear systems.
\newblock In {\em AAAAECC'91}, volume 539 of {\em Lecture Notes in Computer
  Science}, pages 29--38, Oct. 1991.

\bibitem{Kaltofen:2005:CCDet}
E.~Kaltofen and G.~Villard.
\newblock On the complexity of computing determinants.
\newblock {\em Computational Complexity}, 13(3-4):91--130, 2005.

\bibitem{Keller-Gehrig:1985:charpoly}
W.~Keller-Gehrig.
\newblock Fast algorithms for the characteristic polynomial.
\newblock {\em Theoretical computer science}, 36:309--317, 1985.

\bibitem{Pernet:2007:charp}
C.~Pernet and A.~Storjohann.
\newblock Faster algorithms for the characteristic polynomial.
\newblock In {\em {ISSAC}'2007}, pages 307--314, Jul. 29 -- Aug. 1 2007.

\bibitem{Storjohann:2000:thesis}
A.~Storjohann.
\newblock {\em Algorithms for Matrix Canonical Forms}.
\newblock PhD thesis, ETH, Z{\"u}rich, Switzerland, Nov. 2000.

\bibitem{Storjohann:2000:Frob}
A.~Storjohann.
\newblock Computing the frobenius form of a sparse integer matrix.
\newblock to be submitted, Apr. 2000.

\bibitem{Turner:2002:these}
W.~J. Turner.
\newblock {\em Blackbox linear algebra with the LinBox library}.
\newblock PhD thesis, North Carolina State University, May 2002.

\bibitem{Villard:2000:Frob}
G.~Villard.
\newblock Computing the {Frobenius} normal form of a sparse matrix.
\newblock In {\em CASC'00}, pages 395--407, Oct. 2000.

\bibitem{Wiedemann:1986:SSLE}
D.~H. Wiedemann.
\newblock Solving sparse linear equations over finite fields.
\newblock {\em IEEE Transactions on Information Theory}, 32(1):54--62, Jan.
  1986.

\end{thebibliography}
\end{document}